\newtheorem{proposition}{Proposition}
\newcommand{\nn}{\nonumber} 
\newcommand{\mo}[2]{{\mathcal O}(#1^{#2})} 
\newcommand{\dd}{\mathrm{d}}   
\newcommand{\p}[1]{\partial_{#1}}   
\begin{document}

\title{Modification of a Lie algebra-based approach and its application to asymptotic symmetries on a Killing horizon}


\author{Takeshi Tomitsuka}
\affil{Graduate School of Science, Tohoku University, Sendai, 980-8578, Japan }

\author[2]{Koji Yamaguchi}
\affil{Department of Applied Mathematics, University of Waterloo, Waterloo, Ontario, N2L 3G1, Canada}




\begin{abstract}%
We develop a new approach to find asymptotic symmetries in general relativity as a modification of a Lie algebra based approach proposed in Ref.~\cite{tomitsuka2021}. The authors in Ref.~\cite{tomitsuka2021} proposed an algorithmic protocol to investigate asymptotic symmetries. In particular, their guiding principle helps us to find a non-vanishing charge which generates an infinitesimal diffeomorphism. 
However, in order to check the integrability condition for the charges, it is required to solve differential equations to identify the integral curve of vector fields, which is usually quite hard. 
In this paper, we provide a sufficient condition of the integrability condition which can be checked without solving any differential equations, avoiding difficulties in the approach in Ref.~\cite{tomitsuka2021}.
As a demonstration, we investigate the asymptotic symmetries on a Killing horizon and find a new class of asymptotic symmetries.
In 4-dimensional spacetimes with a spherical Killing horizon, we show that the algebra of the corresponding charges is a central extension of the algebra of vector fields. 
\end{abstract}


\maketitle

\section{Introduction}
\label{sec:intro}
The uniqueness theorem \cite{Israel1967,Israel1968,Carter1971} states that every 4-dimensional stationary black hole solution to the Einstein-Maxwell equations in general relativity is completely characterized by just three parameters, mass, angular momentum, and electric charge. 
On the other hand, Bekenstein \cite{Bekenstein1972} proposed that a black hole has entropy proportional to its horizon area $A$, and Hawking \cite{hawking_1974_black_hole_explosions} showed that a black hole emits the thermal radiation and has what we call the Bekenstein-Hawking (BH) entropy $A/4G$, where $G$ denotes the gravitational constant. 
It suggests that the black hole has a lot of microstates even though it can be characterized by the above three parameters.
What is the origin of such microstates?

So far, a great deal of effort has been devoted to explaining the origin of the BH entropy. 
One possible origin is the so-called asymptotic symmetries on a horizon. 
General relativity is invariant under diffeomorphisms.
Sometimes, it is argued that diffeomorphisms are gauge transformations in general relativity, which do not change the state of the system physically. 
If so, the metrics connected by diffeomorphisms cannot be distinguished from each other and hence diffeomorphisms may seem to have nothing to do with the origin of microstates.

However, in fact, not all diffeomorphisms generate gauge 
transformations. 
A way to judge whether a diffeomorphism is not a gauge transformation is to check the value of the charge generating the transformation. 
If the value of a charge is modified by a diffeomorphism, then it is not a gauge transformation since the original metric and transformed one can be discriminated. 
Such a physical 
transformation generates microstates that may contribute to the BH 
entropy.
As well known, the value of a charge generating an infinitesimal diffeomorphism is given by an integral over the boundary of a spacetime in general relativity.
Thus, the asymptotic behaviors of a diffeomorphism and the metric play a crucial role to identify transformations which cannot be gauged away. 
Such asymptotic symmetries of spacetimes have been investigated as a possible origin of the BH entropy, e.g. in Refs.~\cite{Carlip_99,Hotta_2001,Hotta_2002,Hawking_2016,Grumiller_2020,PhysRevLett.125.241302}.

Despite such an importance, studies on asymptotic symmetries often take enormous efforts.
In the conventional approach, we first specify the asymptotic
behavior of the metrics near the boundary and solve the 
asymptotic Killing equation.
The set of all asymptotic Killing vectors forms an algebra which generates a diffeomorphism. 
Next, we check whether the so-called integrability condition is satisfied. 
If it is not, we have to go back to the beginning to get a well-defined charges. 
Even when the integrability condition is satisfied, there remains a 
possibility that all the charges vanish for any metrics in question.
In this case, since the metrics cannot be discriminated by the values of the charges, the diffeomorphism can be gauged away.
Thus, to find non-trivial charges, we also have to restart the above protocol from the beginning. 
In this sense, it is important but sometimes difficult to find an appropriate asymptotic behavior of the metrics in the first step which result in non-trivial and integrable charges by trials and errors. 

As an alternative approach, the authors of this paper and a collaborator proposed the Lie algebra based 
approach in Ref.~\cite{tomitsuka2021}. 
In contrast to the conventional approach, in our approach, we first pick up a pair of two vector fields such that the Poisson bracket of the charges generating infinitesimal diffeomorphisms along them does not vanish at a fixed but arbitrary metric $\bar{g}_{\mu\nu}$, which we call the background metric. 
We then fix a Lie algebra $\mathcal{A}$, which contains those vector fields. 
Instead of the metrics with an asymptotic behavior introduced by hand, we adopted the set of metrics $\mathcal{S}$ which are connected to $\bar{g}_{\mu\nu}$ by diffeomorphisms generated by $\mathcal{A}$.
The algebra of the charges is non-trivial by construction as long as the integrability condition is satisfied since there is a set of elements whose Poisson bracket does not vanish. 
In Ref.~\cite{tomitsuka2021}, we applied this approach to the Rindler horizon and found the new symmetry which we call superdilatations.

Although our approach proposed in Ref.~\cite{tomitsuka2021} may be powerful in finding asymptotic symmetries, there remain hard tasks which are required to check the integrability of the charges directly. 
We need to solve differential equations to obtain all the diffeomorphisms generated by $\mathcal{A}$ and identify $\mathcal{S}$. 
Although there are examples of algebras $\mathcal{A}$ for which the differential equations can be solved, e.g., those in Ref.~\cite{tomitsuka2021}, in general, it is quite difficult to solve the differential equations for a given $\mathcal{A}$.
In this paper, we propose a modification of the approach to overcome this issue. 
A key ingredient is a sufficient condition for charges to be integrable, which can be checked at the background metric $\bar{g}_{\mu\nu}$.
It enables us to check the integrability condition without solving any differential equation. 
Since the algebra of integrable charges can be fully characterized by calculating the value of the Poisson bracket at the background metric $\bar{g}_{\mu\nu}$, there is no need to identify diffeomorphisms generated by $\mathcal{A}$ or $\mathcal{S}$ directly.
As an explicit example, we investigate the asymptotic symmetries on the Killing horizon with our approach. 
We find a new asymptotic symmetry composed of a class of supertranslations, superrotations and superdilatations in $D$-dimensional spacetimes with the Killing horizon. 
In particular, the algebra of the charges in 4-dimensional spacetimes with a spherical Killing horizon is calculated explicitly, which is shown to be a central extension of $\mathcal{A}$. 

This paper is organized as follows:
In Sec.~\ref{sec:Wald_method}, we briefly review the covariant phase space method, which is adopted in this paper to construct the charges generating infinitesimal diffeomorphisms.
In Sec.~\ref{sec:review_on_Lie}, we briefly review a Lie algebra based approach proposed in Ref.~\cite{tomitsuka2021}.
In Sec.~\ref{sec:integrability}, we provide a sufficient condition for the charges to be integrable.
In Sec.~\ref{sec:Killing}, we find a new symmetry on the Killing horizon by using our approach and investigate the algebra of its charges.
In Sec.~\ref{sec:summary}, we present the summary of this paper.
In this paper, we set the speed of light to unity: $c=1$.

\section{Covariant phase space method}\label{sec:Wald_method}
Let us briefly review the covariant phase space method developed in \cite{1987thyg.book..676C,Crnkovic_1988,Lee_Wald_1990,Wald_1993,Iyer_Wald_1994,Iyer_Wald_1995,Wald_Zoupas_2000}, which will be adopted in the rest of this paper. 
This method enables us to investigate and construct the algebra of the charges in an independent way of the choice of a local coordinate system.

We here focus on the gravitational system without any matter field. 
The Einstein-Hilbert action with the cosmological constant $\Lambda$ is given by
\begin{align}
	S & = \int_{\mathcal M}\dd^{D}x {\mathcal L}_{EH},\quad {\mathcal L}_{EH} \coloneqq \frac{\sqrt{-g}}{16\pi G}(R-2\Lambda),
\end{align}
where $\int_\mathcal{M}\dd^{D}x$ denotes the integral over a $D$-dimensional spacetime $\mathcal{M}$, $g$ and $R$ are the determinant of the metric $g_{\mu\nu}$ and the Ricci scalar, respectively.
The variation of the integrand is decomposed into two parts as
\begin{align}
	\delta {\mathcal L}_{EH} & =  -\frac{\sqrt{-g}}{16\pi G}(G^{\mu\nu}+\Lambda g^{\mu\nu})\delta g_{\mu\nu} + \partial_{\mu}\Theta^{\mu}(g,\delta g),\label{eq:variation_LEH}
\end{align}
where $G_{\mu\nu}$ is the Einstein tensor defined by
\begin{align}
    G_{\mu\nu} \coloneqq R_{\mu\nu} - \frac{1}{2}R g_{\mu\nu},
\end{align} 
while $\Theta$ is called the pre-symplectic potential, which is defined by
\begin{align}
	\Theta^{\mu}(g, \delta g) & = \frac{\sqrt{-g}}{16\pi G}\left(g^{\mu\alpha}\nabla^{\beta}\delta g_{\alpha\beta} - g^{\alpha\beta}\nabla^{\mu}\delta g_{\alpha\beta}\right).
\end{align}

For an infinitesimal transformation of metric $\delta_\xi g_{\mu\nu}\coloneqq \pounds_\xi g_{\mu\nu}$, where $\xi$ is a vector field and $\pounds_\xi$ denotes the Lie derivative along it, we have
\begin{align}
    \delta_\xi\mathcal {L}_{EH}=\pounds_\xi \mathcal {L}_{EH}=\partial_\mu \left(\xi^\mu\mathcal {L}_{EH}\right)\label{eq:lie_derivative_LEH}
\end{align}
since $\mathcal{L}_{EH}$ is a scalar density. 
Defining the Noether current
\begin{align}
	J^{\mu}[\xi] \coloneqq \Theta^{\mu}(g, \pounds_{\xi}g) -\xi^{\mu}{\mathcal L}_{EH},\label{eq:J_definition}
\end{align}
Eqs.~\eqref{eq:variation_LEH} and \eqref{eq:lie_derivative_LEH} imply that
\begin{align}
	\partial_\mu J^{\mu}[\xi]=\frac{\sqrt{-g}}{16\pi G} (G^{\mu\nu}+\Lambda g^{\mu\nu})\pounds_{\xi}g_{\mu\nu}
\end{align}
holds. From this equation, one can see that if $g_{\mu\nu}$ satisfies of the equation of motion, i.e., the Einstein equations $G_{\mu\nu}+\Lambda g_{\mu\nu}=0$, the current is conserved:
\begin{align}
	\partial_\mu J^\mu[\xi]\approx 0,
\end{align}
where $\approx$ denotes the equality which holds for any solution of the equation of motion.
In fact, we can decompose the current into two parts \cite{Iyer_Wald_1995}:
\begin{align}
    J^\mu[\xi]=\partial_\nu Q^{\mu\nu}[\xi]+\mathcal{C}\indices{^\mu_\nu}\xi^\nu,\quad 
    Q^{\mu\nu}[\xi]\coloneqq -\frac{\sqrt{-g}}{8\pi G}\nabla^{[\mu}\xi^{\nu]},\quad \mathcal{C}\indices{^\mu_\nu}\coloneqq \frac{\sqrt{-g}}{8\pi G}(G\indices{^\mu_\nu}+\Lambda g\indices{^\mu_\nu}).\label{eq:J_decomposition}
\end{align}
Here, the bracket $[\ ,\ ]$ for indices denotes an anti-symmetric symbol which is defined by
\begin{align}
	A_{[\mu_1\mu_2\cdots\mu_d]} \coloneqq \frac{1}{d!}\sum_{\sigma\in S_d}\mathrm{sgn}(\sigma)A_{\mu_{\sigma(1)}\mu_{\sigma(2)}\cdots\mu_{\sigma(d)}},
\end{align}
where $S_d$ is the permutation group and $\mathrm{sgn}(\sigma)$ denotes the signature of $\sigma\in S_d$. 

The Noether charge is defined as the integral of the current over a $(D-1)$-dinensional submanifold $\Sigma$ in $\mathcal{M}$ and given by
\begin{align}
	Q[\xi] \coloneqq \int_{\Sigma}(\dd^{D-1}x)_{\mu}J^{\mu}[\xi] 
	        \approx \int_{\Sigma}(\dd^{D-1}x)_{\mu}\partial_{\nu}Q^{\mu\nu}[\xi]  =\oint_{\partial\Sigma}(\dd^{D-2}x)_{\mu\nu}Q^{\mu\nu}[\xi],
	\label{Q_xi}
\end{align}
where $\partial\Sigma$ is the boundary of $\Sigma$. Note that the integral measure is given by
\begin{align}
	(\dd^{D-p}x)_{\mu_{1}\dots\mu_{p}} := \frac{\epsilon_{\mu_{1}\dots\mu_{p}\mu_{p+1}\dots\mu_{D}}}{p!(D-p)!}\dd x^{\mu_{p+1}}\wedge\dots\wedge\dd x^{\mu_{D}}, 
\end{align}
where $\epsilon_{\mu_{1}\dots\mu_{D}}$ is the $D$-dimensional Levi-Civita symbol.

For any linear perturbations of metric $\delta_1g_{\mu\nu}$ and $\delta_{2}g_{\mu\nu}$, the pre-symplectic current is defined as
\begin{align}
	\omega^{\mu}(g, \delta_{1}g, \delta_{2}g) := \delta_{1}\Theta^{\mu}(g, \delta_{2}g) - \delta_{2}\Theta^{\mu}(g,\delta_{1}g).
\end{align}
The integral of the pre-symplectic current over $\Sigma$ is called the pre-symplectic form, denoted by
\begin{align}
	\Omega(g, \delta_{1}g, \delta_{2}g) := \int_{\Sigma}(\dd^{D-1}x)_{\mu}\omega^{\mu}(g,\delta_{1}g, \delta_{2}g).
\end{align}
Let $H[\xi]$ denote the charge generating an infinitesimal transformation such that $g_{\mu\nu}\mapsto g_{\mu\nu}+\pounds_\xi g_{\mu\nu}$ for a vector field $\xi$. 
For a linear perturbation $\delta g_{\mu\nu}$, it is known \cite{Lee_Wald_1990,Wald_1993,Iyer_Wald_1994,Iyer_Wald_1995,Wald_Zoupas_2000} that the variation of the charge is given by 
\begin{align}
	\delta H[\xi] & = \Omega(g, \delta g, \pounds_{\xi}g) =\int_{\Sigma}(\dd^{D-1}x)_{\mu}\omega^{\mu}(g,\delta g, \pounds_{\xi}g).
\end{align}

From Eqs.~\eqref{eq:variation_LEH}, \eqref{eq:J_definition} and \eqref{eq:J_decomposition}, we get
\begin{align}
	\omega^\mu(g,\delta g, \pounds_\xi g)\approx\delta \mathcal{C}\indices{^\mu_\nu}\xi^\nu+\partial_\nu S^{\mu\nu}\left(g,\delta g,\pounds_\xi g\right),
	\label{onshell_pre}
\end{align}
where $S^{\mu\nu}\left(g,\delta g,\pounds_\xi g\right)$ is an anti-symmetric tensor defined by
\begin{align}
	S^{\mu\nu}\left(g,\delta g,\pounds_\xi g\right) & \coloneqq \delta Q^{\mu\nu}[\xi]+2\xi^{[\mu}\Theta^{\nu]}(g,\delta g)\nonumber \\
	&=\frac{\sqrt{-g}}{8\pi G}\Bigl(
	-\frac{1}{2}\delta g^{\alpha}_{\ \alpha}\nabla^{[\mu}\xi^{\nu]} + \delta g^{\alpha[\mu}\nabla_{\alpha}\xi^{\nu]} - \nabla^{[\mu}\delta g^{\nu]\alpha}\xi_{\alpha} + \xi^{[\mu}\nabla_{\alpha}\delta g^{\nu]\alpha}- \xi^{[\mu}\nabla^{\nu]}\delta g^{\alpha}_{\ \alpha}\Bigl).
	\label{eq_definition_S}
\end{align}
Therefore, if the generator $H[\xi]$ exists, its variation satisfies
\begin{align}
    \delta H[\xi]\approx \int_{\Sigma}\left(\dd^{D-1}x \right)_\mu\delta\mathcal{C}\indices{^\mu_\nu}\xi^\nu+\oint_{\partial\Sigma}\left(\dd^{D-2}x\right)_{\mu\nu}S^{\mu\nu}\left(g,\delta g,\pounds_\xi g\right). 
\end{align}
Assuming $\delta g_{\mu\nu}$ satisfies the linearized Einstein equations, the first term vanishes and we get
\begin{align}
    \delta H[\xi]\approx\oint_{\partial\Sigma}\left(\dd^{D-2}x\right)_{\mu\nu}S^{\mu\nu}\left(g,\delta g,\pounds_\xi g\right).\label{eq:variation_H}
\end{align}
This equation implies that the values of charges are characterized by the asymptotic behaviors of the metric $g_{\mu\nu}$, its perturbation $\delta g_{\mu\nu}$ and the vector field $\xi$. 

Let us now investigate under what condition the charge exists. 
Given the formula for the variation in Eq.~\eqref{eq:variation_H}, it must hold
\begin{align}
	  (\delta_{1}\delta_{2}-\delta_{2}\delta_{1})H[\xi]=0\label{eq:deltaH_commute}
\end{align}
for any variation $\delta_1$ and $\delta_2$ as the partial derivatives of a multivariable function commute. Since it holds
\begin{align}
    (\delta_{1}\delta_{2}-\delta_{2}\delta_{1})H[\xi]& = -\int_{\partial \Sigma}(\dd^{D-2}x)_{\mu\nu}\left(\xi^{[\mu}\delta_{1}\Theta^{\nu]}(g,\delta_{2}g)-\xi^{[\mu}\delta_{2}\Theta^{\nu]}(g,\delta_{1}g)\right) \nonumber \\
	  & =-\int_{\partial\Sigma}(\dd^{D-2}x)_{\mu\nu}\xi^{[\mu}\omega^{\nu]}(g,\delta_{1}g,\delta_{2}g)\nonumber\\
	  & \approx-\int_{\partial\Sigma}(\dd^{D-2}x)_{\mu\nu}\xi^{[\mu}\partial_\alpha S^{\nu]\alpha}(g,\delta_{1}g,\delta_{2}g),
	  \label{eq:integrability_general}
\end{align}
equation~\eqref{eq:deltaH_commute} is equivalent to
\begin{align}
    \int_{\partial\Sigma}(\dd^{D-2}x)_{\mu\nu}\xi^{[\mu}\partial_\alpha S^{\nu]\alpha}(g,\delta_{1}g,\delta_{2}g)=0\label{condition1}.
\end{align}
Although this is a necessary condition for the charge to exist, it is also a sufficient condition as long as the space of $g_{\mu\nu}$ has no topological obstruction \cite{Wald_Zoupas_2000}. Therefore, we call Eq.~\eqref{condition1} the integrability condition. 

If the integrability condition is satisfied, the charge at a metric $g_{\mu\nu}$ can be evaluated by the integral along a smooth path from a reference metric $g^{(0)}_{\mu\nu}$ to $g_{\mu\nu}$ in the space of metrics. More precisely, by using an arbitrary one-parameter set of metrics $g_{\mu\nu}(\lambda)$ such that $g_{\mu\nu}(\lambda=0)=g^{(0)}_{\mu\nu}$ and $g_{\mu\nu}(\lambda=1)=g_{\mu\nu}$, the charge at $g_{\mu\nu}$ is evaluated as
\begin{align}
	H[\xi] = \int_{0}^{1}\dd\lambda\int_{\partial \Sigma}(\dd^{D-2}x)_{\mu\nu}(\partial_{\lambda} Q^{\mu\nu}[\xi](g,\partial_{\lambda}g) + 2\xi^{[\mu}\Theta^{\nu]}(g,\partial_{\lambda} g)),\label{eq_charge_int_along_path}
\end{align}
where we have set the reference of the charge $H[\xi]$ so that it vanishes at $g^{(0)}_{\mu\nu}$. 
Since Eq~\eqref{eq:deltaH_commute} is satisfied, the charge in Eq.\eqref{eq_charge_int_along_path} is independent of the choice of path $g_{\mu\nu}(\lambda)$.

\section{Review on a Lie algebra based approach}
\label{sec:review_on_Lie}
In this section, we review the our approach developed in \cite{tomitsuka2021}, where we proposed a guiding principle which helps us to find a non-trivial algebra of the charges. This principle ensures the existence of two elements in the algebra such that their Poisson bracket does not vanish. Therefore, as long as the integrability condition of the charges is satisfied, the transformation generated by the algebra cannot be gauged away. 


In order to investigate the asymptotic symmetries of a background metric $\bar{g}_{\mu\nu}$ of interest with the covariant phase space method, we have to specify (i) the set of metrics which includes $\bar{g}_{\mu\nu}$ and (ii) the set of vector fields which forms a closed algebra. In the following, they are denoted by $\mathcal{S}$ and $\mathcal{A}$, respectively. 
These sets must be chosen such that an element of $\mathcal{S}$ is mapped into itself under any infinitesimal diffeomorphism generated by $\mathcal{A}$. 
Note that only the asymptotic behaviors of the metrics and the vector fields are relevant for the charges. 
In prior studies, such as \cite{brown1986}, it is common to fix the algebra $\mathcal{A}$ as the set of vectors satisfying the asymptotic Killing equation for a given $\mathcal{S}$. 
In this approach, lots of trials and errors are required to find $\mathcal{S}$ such that the integrability condition is satisfied and that the charges form a non-trivial algebra. 

In the Lie algebra based approach proposed in \cite{tomitsuka2021}, an alternative way is adopted to fix $\mathcal{S}$ and $\mathcal{A}$; 
given an algebra $\mathcal{A}$, we define $\mathcal{S}$ by
\begin{align}
    \mathcal{S}\coloneqq \left\{\phi^{*}\bar{g}_{\mu\nu}\middle| \phi\in\left\{ \text{all diffeomorphisms generated by $\mathcal{A}$}\right\}\right\},\label{eq:definition_S}
\end{align}
where $\phi^*$ denotes the pullback. 
In this case, we need to choose $\mathcal{A}$ carefully so that the resulting charges are integrable and form a non-trivial algebra. 
In the rest of this paper, the set $\mathcal{S}$ is always defined by Eq.\eqref{eq:definition_S}. 

There are advantages to adopt the set $\mathcal{S}$ defined in Eq.\eqref{eq:definition_S}. 
First, if $\bar{g}_{\mu\nu}$ is a solution of the Einstein equations, then any element of $\mathcal{S}$ automatically satisfies the Einstein equations. 
In addition, a linearized perturbation $\delta g_{\mu\nu}$ is generated by an infinitesimal diffeomorphism and can be written as
\begin{align}
    \delta g_{\mu\nu}=\pounds_\chi g_{\mu\nu}\label{set2}
\end{align}
with a vector field $\chi\in\mathcal{A}$. In the following, the variation corresponding to such a perturbation is denoted by $\delta_\chi$. 
This property is particularly important to find a candidate of $\mathcal{A}$ with the Lie algebra based method as we will see soon. 
A schematic picture of the set of metrics $\mathcal{S}$ is shown in FIG.~\ref{fig:configuration_space}.
\begin{figure}[htbp]
	\centering
	\includegraphics[width=8.5cm]{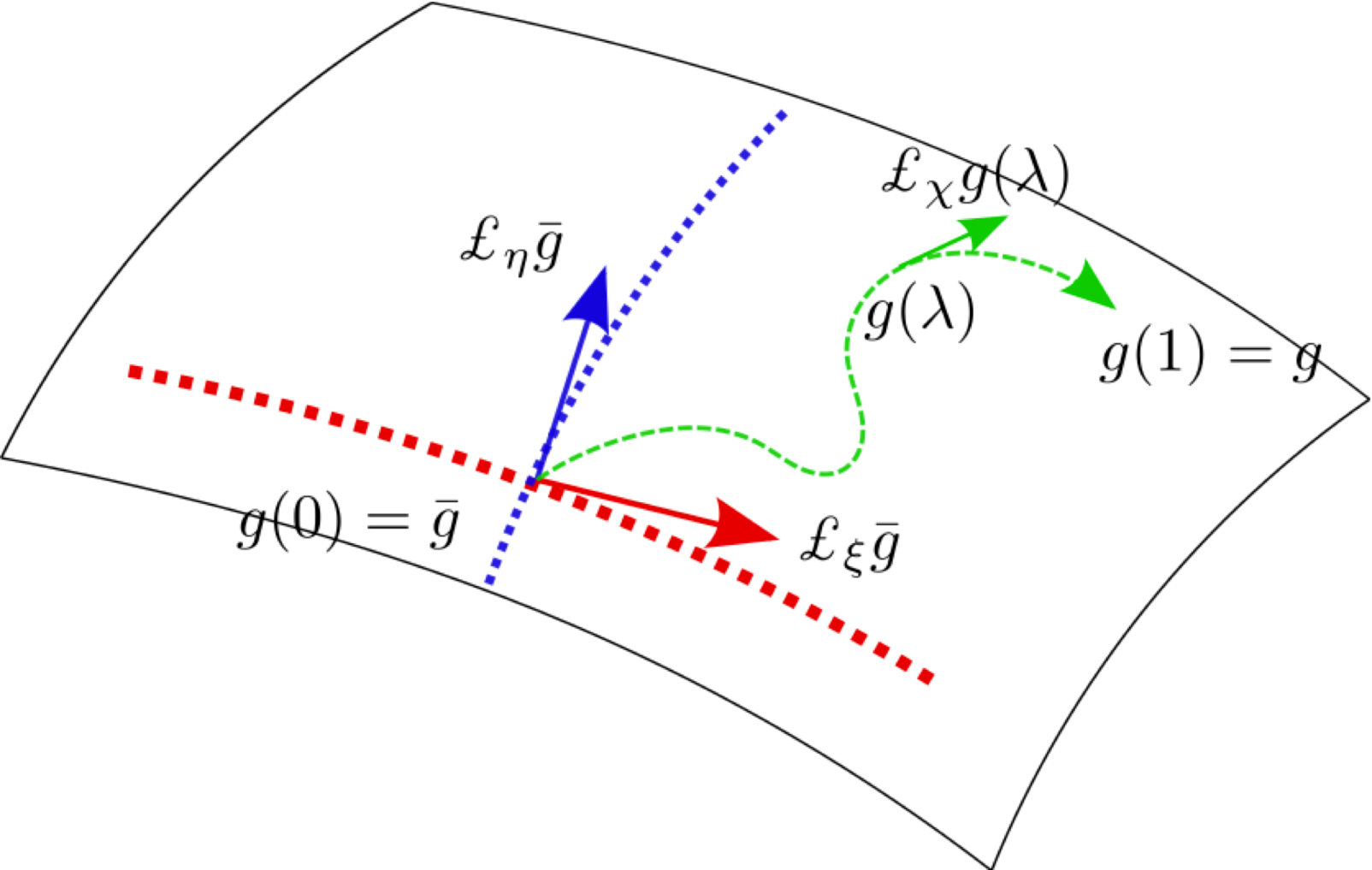}
	\caption{A schematic picture of the set of metrics $\mathcal{S}$ defined in Eq.\eqref{eq:definition_S}.
	Vector fields $\xi$ and $\eta$ are elements of a Lie algebra $\mathcal{A}.$
	All metrics in $\mathcal{S}$ are connected to the background metric $\bar{g}_{\mu\nu}$ by diffeomorphisms generated by $\mathcal{A}$. For any metric $g_{\mu\nu}\in\mathcal{S}$, there exists a smooth path $g_{\mu\nu}(\lambda)$ from $\bar{g}_{\mu\nu}$ to $g_{\mu\nu}$. For any tangent $\delta g_{\mu\nu}(\lambda)$ at a point $g_{\mu\nu}(\lambda)$ in $\mathcal{S}$, there is a vector field $\chi\in\mathcal{A}$ such that $\delta g_{\mu\nu}(\lambda)=\pounds_{\chi}g_{\mu\nu}(\lambda)$.}
	\label{fig:configuration_space}
\end{figure}

Now, let us review the key idea in \cite{tomitsuka2021}, which is helpful to find $\mathcal{A}$ yielding a non-trivial algebra of charges.  
The algebra is non-trivial if
\begin{align}
	\label{delH}
	\exists \xi,\eta\in\mathcal{A}, \exists g_{\mu\nu}\in\mathcal{S},\quad 
	\delta_{\eta}H[\xi]\biggl|_{g_{\mu\nu}}\neq 0,
\end{align}
or equivalently,  $\{H[\xi],H[\eta]\}\Bigl|_{g_{\mu\nu}}\neq 0$.
From Eq.~\eqref{eq_definition_S}, Eq.~\eqref{delH} can be recast into
\begin{align}
    \exists \xi,\eta\in\mathcal{A}, \exists g_{\mu\nu}\in\mathcal{S},
	\int _{\partial\Sigma}(\dd^{D-2}x)_{\mu\nu}S^{\mu\nu}(g,\pounds_\eta g,\pounds_\xi g)  & \neq 0.     \label{non_triviality}
\end{align}
The diffeomorphism associated with the algebra cannot be gauged away if Eq.~\eqref{non_triviality} is satisfied.
Otherwise, all the charges vanish for any metric, implying that the metrics in $\mathcal{S}$ cannot be discriminated by the value of charges and that the diffeomorphisms generated by $\mathcal{A}$ may be gauged away. 

Note that it may be hard to check the condition in Eq.~\eqref{non_triviality} directly since the set of metrics $\mathcal{S}$ depends on $\mathcal{A}$. 
Instead, we adopt a sufficient condition
\begin{align}
    \exists \xi,\eta\in\mathcal{A},
	\int _{\partial \Sigma}(\dd^{D-2}x)_{\mu\nu}S^{\mu\nu}(\bar{g},\pounds_{\eta} \bar{g},\pounds_{\xi} \bar{g}) \neq 0\label{eq_non-triviality_background}
\end{align}
as a guiding principle to fix $\mathcal{A}$. More precisely, we first derive a formula for
\begin{align}
\int _{\partial \Sigma}(\dd^{D-2}x)_{\mu\nu}S^{\mu\nu}(\bar{g},\pounds_{\eta} \bar{g},\pounds_{\xi} \bar{g})    \label{eq:int_S}
\end{align}
for arbitrary vector fields $\xi$ and $\eta$. 
Since Eq.~\eqref{eq_non-triviality_background} can be calculated at $\bar{g}_{\mu\nu}$, we do not need to specify $\mathcal{S}$ nor $\mathcal{A}$ at this point. 
By using it, we then fix two vector fields $\xi$ and $\eta$ so that Eq.\eqref{eq:int_S} does not vanish. 
We define $\mathcal{A}$ as a closed algebra containing $\eta$ and $\xi$, which can be obtained by calculating the commutators of $\xi$ and $\eta$.
The algebra $\mathcal{A}$ defined in this way trivially satisfies Eq.\eqref{non_triviality} and hence the diffeomorphisms generated by $\mathcal{A}$ cannot be gauged away by construction. 

Of course, we also need to impose Eq.~\eqref{condition1} to get integrable charges. This condition can be recast into
\begin{align}
	0=\int_{\partial \Sigma}\left(\dd^{D-2}x\right)_{\mu\nu}\xi^{[\mu}\partial_\alpha S^{\nu]\alpha}\left(g,\pounds_\eta g,\pounds_\chi g\right), \quad \forall \xi,\eta,\chi \in \mathcal{A},\quad \forall g_{\mu\nu}\in\mathcal{S}\label{eq_integrability}
\end{align}
where we have used Eq.~\eqref{set2}. 

For a given background metric $\bar{g}_{\mu\nu}$, 
Eq.~\eqref{eq_non-triviality_background} works as a guiding principle to find non-trivial charges. 
However, there still remains a difficulty in finding integrable charges since we have to choose $\xi$ and $\eta$ so that  Eq.~\eqref{eq_integrability} is also satisfied, which requires trials and errors.
It often takes an effort to check Eq.~\eqref{eq_integrability} for an arbitrary $g_{\mu\nu}\in\mathcal{S}$ since we have to calculate the asymptotic behaviors of $g_{\mu\nu}$ near the boundary. As a necessary condition, in Ref.~\cite{tomitsuka2021}, we adopted Eq.~\eqref{condition1} at the background metric, i.e.,
\begin{align}
    \int_{\partial \Sigma}\left(\dd^{D-2}x\right)_{\mu\nu}\xi^{[\mu}\partial_\alpha S^{\nu]\alpha}\left(\bar{g},\pounds_\eta \bar{g},\pounds_\chi \bar{g}\right)=0, \quad\forall \xi,\eta,\chi \in \mathcal{A}\label{eq_integrability_background}
\end{align}
before checking Eq.~\eqref{eq_integrability} directly. 
This condition can be checked relatively easily since we only need the background metric $\bar{g}_{\mu\nu}$ and the algebra $\mathcal{A}$.
The approach proposed in Ref.~\cite{tomitsuka2021} can be summarized as the following six steps:
\begin{screen}
\begin{description}
    \item[Step~1]
         \quad Fix a background metric $\bar{g}_{\mu\nu}$ of interest.
	\item[Step~2]
	      \quad For the background metric, find two vector fields $\xi$ and $\eta$ satisfying Eq.~\eqref{eq_non-triviality_background}.
	      These are the candidates generating non-trivial diffeomorphisms whose charges are integrable.
	\item[Step~3]
	      \quad Introduce the minimal Lie algebra $\mathcal{A}$ including $\xi$ and $\eta$ by calculating their commutators.
	      Check whether the integrability condition at the background metric, i.e., Eq.~\eqref{eq_integrability_background}, 
	      is satisfied for the algebra $\mathcal{A}$ as a necessary condition for Eq.~\eqref{eq_integrability}.
	      If it holds, go to the next step. Otherwise, go back to Step~2.
	\item[Step~4] 
	\quad Construct the set $\mathcal{S}$ of metrics $g_{\mu\nu}$ which are connected to the background metric $\bar{g}_{\mu\nu}$ via diffeomorphisms generated by $\mathcal{A}$.
	\item[Step~5]
	\quad Check the integrability condition in Eq.~\eqref{condition1}.
	      If it is satisfied, then go to the following step.
	      If not, go back to Step~2.
	\item[Step~6]
	\quad Calculate the charges by using Eq.~\eqref{eq_charge_int_along_path}. Here, we fix the reference metric as the background metric: $g^{(0)}_{\mu\nu}=\bar{g}_{\mu\nu}$.
\end{description}
\end{screen}

In our previous paper \cite{tomitsuka2021}, we only considered the vacuum solutions to the Einstein equation without the cosmological constant. 
We can easily extend the analysis to the solutions with the cosmological constant. 
In this case, Eq.\eqref{non_triviality} is recast into
\begin{align}
&\exists \xi,\eta\in\mathcal{A}, \exists g_{\mu\nu}\in\mathcal{S},\nn \\
&\frac{1}{8\pi G} \int_{\partial\Sigma}\Biggl[(2\nabla^{\alpha}\eta^{\mu}\nabla_{\alpha}\xi^{\nu} -\nabla_{\alpha}\eta^{\alpha}\nabla^{\mu}\xi^{\nu}+\nabla_{\alpha}\xi^{\alpha}\nabla^{\mu}\eta^{\nu}) 
		-C\indices{_{\alpha\beta}^{\mu\nu}}\xi^{\alpha}\eta^{\beta}+ \frac{4\Lambda}{D-1}\xi^{\mu}\eta^{\nu}\Biggl]{\boldsymbol \epsilon}_{\mu\nu} \neq 0. \label{Non}
\end{align}
where $C\indices{_{\alpha\beta}^{\mu\nu}}\coloneqq g^{\mu\gamma}g^{\nu\delta}C_{\alpha\beta\gamma\delta}$ is the Weyl tensor and ${\boldsymbol \epsilon}_{\mu\nu}\coloneqq \sqrt{-g}(\dd ^{D-2}x)_{\mu\nu}$. Note that it can be checked that Eq.~\eqref{Non} is equivalent to Eq.~(39) in Ref.~\cite{tomitsuka2021} if the cosmological constant $\Lambda$ vanishes.

In Step~2 of the above algorithmic protocol, Eq.~\eqref{eq_non-triviality_background} plays a role of a guiding principle to find non-trivial charges. 
In addition, Eq.~\eqref{eq_integrability_background} in Step~3 helps to reduce useless calculations on the charges which turn out not to be integrable. 
An advantage of the above algorithmic protocol is the fact that calculations in Steps~2 and 3 can be done by using only the background metric $\bar{g}_{\mu\nu}$. 
By using this protocol, we have found a new class of symmetries on a Rindler horizon in Ref.~\cite{tomitsuka2021}, which generates position dependent dilatations in  time and in the direction perpendicular to the horizon. We have termed such a transformation superdilatation.
However, there still remain the following hard tasks:
In Step~4, it is required to identify all diffeomorphisms generated by vector fields in $\mathcal{A}$ to obtain $\mathcal{S}$, which is usually difficult. 
Only after this step is completed, the integrability condition can be checked for all metrics in $\mathcal{S}$ in Step~5. 


To overcome this issue, in the next section, we propose a sufficient condition for the charges to be integrable, which can be checked at the background metric $\bar{g}_{\mu\nu}$. 
It enables us to find an algebra $\mathcal{A}$ yielding non-trivial and integrable charges without explicitly calculating diffeomorphisms generated by $\mathcal{A}$ or the metrics in $\mathcal{S}$. 
This is a key advantage of the new approach in this article. 
To calculate the charges explicitly, we still need to identify $\mathcal{A}$ and $\mathcal{S}$. 
However, since the sufficient condition ensures that the charges are integrable, there is no possibility that the efforts in calculating $\mathcal{A}$ and $\mathcal{S}$ are wasted. 

It should be noted that the algebra of charges can be identified without calculating the values of the charges explicitly. 
In fact, the Poisson bracket of the charges satisfies
\begin{align}
    \{H[\xi], H[\eta]\} = H\bigl[[\xi,\eta]\bigl] + K(\xi,\eta),\label{eq:poisson_bracked_general}
\end{align}
where $[\xi,\eta]$ is a commutator of $\xi,\eta$ and $K(\xi,\eta)$ is a constant dependent not on $g
_{\mu\nu}$ but on $\bar{g}_{\mu\nu}$
(see e.g., Ref.~\cite{doi:10.1063/1.2889721}). 
Evaluating the left hand side of Eq.~\eqref{eq:poisson_bracked_general} at the background metric $\bar{g}_{\mu\nu}$, we get $K(\xi,\eta)$ since it is always possible to make the values of charges $H[\chi]\Bigl|_{\bar{g}_{\mu\nu}}$ at the background metric $\bar{g}_{\mu\nu}$ vanish for all $\chi\in\mathcal{A}$. 
If $K(\xi,\eta)$ can be absorbed into charges by shifting them by constants, then the algebra of the charges is isomorphic to $\mathcal{A}$. If not, the algebra of the charges is a central extension of $\mathcal{A}$.
Therefore, we can fully characterize the algebra of charges itself without calculating the diffeomorphisms generated by $\mathcal{A}$ explicitly, overcoming the difficulties in the approach in Ref.~\cite{tomitsuka2021}. 

\section{Integrability condition}
\label{sec:integrability}
In this section, we provide a sufficient condition for the charges to be integrable. This condition can be checked at the background metric, implying that we can obtain integrable charges without calculating the family of metrics $\mathcal{S}$ directly.


Given an algebra $\mathcal{A}$, the integrability condition that the second line in Eq.~\eqref{eq:integrability_general} equals to zero is recast to
\begin{align}
	\int_{\partial\Sigma}(\dd^{D-2}x)_{\mu\nu}\xi^{[\mu}(x)\omega^{\nu]}(g, \pounds_{\eta}g, \pounds_{\chi}g; x)=0\ \ \ \forall \xi, \eta, \chi \in \mathcal{A},\quad \forall g\in\mathcal{S} \label{integ},
\end{align}
where we have used Eq.~\eqref{set2}, $\mathcal{S}$ is the set of metrics defined in Eq.~\eqref{eq:definition_S} and $\omega^{\nu}(g, \delta_{1}g, \delta_{2}g;x)$ is given by
\begin{align}
	\omega^{\nu}(g, \delta_{1}g, \delta_{2}g;x) & =\frac{\sqrt{-g(x)}}{16\pi G}\left(g^{\nu\alpha}(x)\left(g^{\rho\sigma}(x)g^{\beta\gamma}(x)-2g^{\rho\beta}(x)g^{\sigma\gamma}(x)\right)\right. \nn \\
	& \left.	+2g^{\nu\gamma}(x)g^{\rho[\alpha}(x)g^{\sigma]\beta}(x) + g^{\nu\rho}(x)g^{\alpha\beta}(x)g^{\sigma\gamma}(x)
	\right)\delta_{[1}g_{\rho\sigma}(x)\nabla_{\gamma}\delta_{2]}g_{\alpha\beta}(x)
\end{align}
for a solution $g_{\mu\nu}$ of the Einstein equations and linearized perturbations $\delta_1g_{\mu\nu}$ and $\delta_2g_{\mu\nu}$ satisfying the linearized Einstein equations. 
To check whether Eq.\eqref{integ} is satisfied directly, we need the asymptotic behavior of the integrand near the boundary $\partial \Sigma$. By using the well-known duality between a diffeomorphism and a coordinate transformation of tensor fields (see Appendix~\ref{sec:duality} for the details), we derive a formula to calculate the asymptotic behaviors under certain assumptions which will be made below. 

First we introduce our set-up and several assumptions to derive the sufficient condition for the charges to be integrable. 
We fix a $D$-dimensional background spacetime $(M,\bar{g})$ and a Cauchy surface $\Sigma$.
For notational simplicity, we fix a specific coordinate system $\psi : M \to {\mathbb R^{D}}$ in such a way that the Cauchy
surface is characterized by $t = const.$ and that its boundary is specified by $\rho = 0$, where we have defined
\begin{align}
\psi(p) = (y^{0}(p), y^{1}(p), y^{M}(p)) = (t, \rho, \sigma^{M})\ \ \ \ (M=2,\cdots,D-1).
\end{align}
Let $\mathcal{H}$ denote the union of the boundary for all $t$:
\begin{align}
\mathcal{H}:=\{p\in\partial\Sigma_{t} \ \ \text{for some $t$}\},
\end{align}
or equivalently, 
\begin{align}
\mathcal{H}=\{p\in M|y^{1}(p)=0\}.
\end{align}
In this set-up, the integrability condition evaluated at the background metric is given by
\begin{align}
	\int_{\partial\Sigma}(\dd^{D-2} y)_{\mu\nu}\bar{\xi}^{[\mu}(y)\omega^{\nu]}(\bar{g}, \pounds_{\bar{\eta}}\bar{g}, \pounds_{\bar{\chi}}\bar{g}; y) &= \int_{\psi(\partial\Sigma)}\dd \sigma^{2}\dd \sigma^{3}\cdots \dd \sigma^{D} \bar{\xi}^{[t}(y)\omega^{\rho]}(\bar{g}, \pounds_{\bar{\eta}}\bar{g}, \pounds_{\bar{\chi}}\bar{g};y)\nn \\
	&=0 \ \ \ \ \ \forall \bar{\xi}, \bar{\eta}, \bar{\chi} \in \mathcal{A}. \label{backint}
\end{align}
We assume that any diffeomorphism generated by $\mathcal{A}$ does not map a point in the outside (resp. inside) of $\{\Sigma_t\}_t$ to a point in the inside (resp. outside) of $\{\Sigma_t\}_t$. Then, the $\rho$-component of the vector fields generating the diffeomorphisms must vanish on the boundary. 
Thus, we impose the following condition on the asymptotic behaviors of the vector fields:
\begin{align}
	\forall \xi \in {\mathcal A}, \
	\xi^{t}(y) =\mo{1}{},\ \xi^{\rho}(y) =\mo{\rho}{},\ \xi^{M} = \mo{1}{}\ \ \ \ \ \ \ \ (\rho \to 0)
	\label{sufficient1}.
\end{align}

Let us assume that
\begin{align}
	\forall\eta,\chi\in\mathcal{A},\quad \omega^{t}(\bar{g}, \pounds_{\eta}\bar{g}, \pounds_{\chi}\bar{g}; y) &= \mo{1}{},\
	\omega^{\rho}(\bar{g}, \pounds_{\eta}\bar{g}, \pounds_{\chi}\bar{g}; y) = \mo{\rho}{}, \nn \\
	&\omega^{M}(\bar{g}, \pounds_{\eta}\bar{g}, \pounds_{\chi}\bar{g}; y) = \mo{1}{}\qquad (\rho\to 0)
	\label{sufficient2}
\end{align}
hold.  
Under these assumptions, we get
\begin{align}
	\label{sufficient}
    \forall \bar{\xi},\bar{\eta},\bar{\chi}\in\mathcal{A},\quad 
	\bar{\xi}(y)^{[\mu}\omega^{\nu]}(\bar{g}, \pounds_{\bar{\eta}}\bar{g}, \pounds_{\bar{\chi}}\bar{g}; y)
	=
	\begin{pmatrix}
		0           & \mo{\rho}{} & \mo{1}{}  &\cdots&\cdots  & \mo{1}{}    \\
		\mo{\rho}{} & 0           & \mo{\rho}{} &\cdots&\cdots& \mo{\rho}{} \\
		\mo{1}{}    & \mo{\rho}{} & 0           & \mo{1}{}&\cdots &\mo{1}{}  \\
		\vdots & \vdots & \mo{1}{} & \ddots &\ddots & \vdots\\
		\vdots & \vdots & \vdots&\ddots  & \ddots &\mo{1}{}\\
		\mo{1}{}    & \mo{\rho}{} & \mo{1}{}    &\cdots&\mo{1}{}& 0
	\end{pmatrix}.
\end{align}
Since Eq.~\eqref{backint} is clearly satisfied when Eq.\eqref{sufficient} holds, 
Eqs.~\eqref{sufficient1} and \eqref{sufficient2} are a sufficient condition for Eq.~\eqref{backint} to hold.

Next we further show that Eqs.~\eqref{sufficient1} and \eqref{sufficient2} are a sufficient condition for the charges to be integrable at an arbitrary metric, i.e., Eq.~\eqref{integ}. 
Fix a diffeomorphism $\phi : M \to M$ generated by $\mathcal{A}$.
The integrability condition \eqref{integ} at $g = \phi^{*}\bar{g}$ is written as
\begin{align}
	\int_{\partial\Sigma}(\dd^{D-2} x')_{\mu\nu}\xi^{[\mu}(x')\omega^{\nu]}(g, \pounds_{\eta}g, \pounds_{\chi}g; x')=0\ \ \ \forall \xi, \eta, \chi \in \mathcal{A},
	\label{integ2}
\end{align}
where we have adopted another coordinate system $\varphi$, which is related with $\psi$ by
\begin{align}
    \varphi = \psi \circ \phi :p\in\mathcal{M}\mapsto \varphi(p) = (x'^{0}(p),\cdots,x'^{D-1}(p)).
\end{align}
By using Eqs.~\eqref{det}, \eqref{duality}, \eqref{duality1} and \eqref{duality2}, we have
\begin{align}
	\xi^{[\mu}(x'(p))\omega^{\nu]}(g, \pounds_{\eta}g, \pounds_{\chi}g; x'(p)) = \bar{\xi}^{[\mu}(y(\phi(p)))\omega^{\nu]}(\bar{g}, \pounds_{\bar{\eta}}\bar{g}, \pounds_{\bar{\chi}}\bar{g}; y(\phi(p))), \label{duality3}
\end{align}
where the vector field $\bar{\xi}$ is defined by $\bar{\xi} \coloneqq  (\phi^{*})^{-1}\xi $. 
On the other hand, for the algebra $\mathcal{A}$ whose elements satisfy the asymptotic condition in Eq.~\eqref{sufficient1}, we have
\begin{align}
	x'(y) = (\mo{1}{}, \mo{\rho}{}, \mo{1}{},\cdots,\mo{1}{})\quad (\rho\to 0)
	\label{coordinate_asymp}.
\end{align}
See Appendix~\ref{Flow_proof} for proof.
The integral measure in Eq.~\eqref{integ2} is explicitly calculated as
\begin{align}
	(\dd^{D-2} x')_{\mu\nu}\Big|_{\partial \Sigma} & = \frac{1}{(D-2)!2!}\epsilon_{\mu\nu\alpha_{2}\cdots\alpha_{D-1}}e\indices{^{\alpha_{2}}_{M_{2}}}\cdots e\indices{^{\alpha_{D-1}}_{M_{D-1}}} d\sigma^{M_{2}}\wedge \cdots \wedge d\sigma^{M_{D-1}} \label{measure}
\end{align}
where $e\indices{^{\alpha}_{M}}:=\frac{\p{}x'^{\alpha}}{\p{}\sigma^{M}}$.
By using Eq.~\eqref{coordinate_asymp},
the asymptotic behavior of $e\indices{^{\alpha}_{M}}$ is given by
\begin{align}
    \left(e\indices{^{0}_{M}},e\indices{^{1}_{M}},e\indices{^{2}_{M}},\cdots,e\indices{^{D-1}_{M}}\right)=(\mo{1}{}, \mo{\rho}{}, \mo{1}{},\cdots,\mo{1}{})\quad (\rho\to 0)\label{asym_jacobian}
\end{align}
for any $M=2,3,\cdots D-1$. 
By using Eqs.~\eqref{duality3} and \eqref{measure}, the left hand side of Eq.~\eqref{integ2} is proportional to 
\begin{align}
	\int_{\phi(\partial\Sigma)}\bar{\xi}^{[\mu}(y)\omega^{\nu]}(\bar{g}, \pounds_{\bar{\eta}}\bar{g}, \pounds_{\bar{\chi}}\bar{g}; y)\epsilon_{\mu\nu\alpha_{2}\cdots\alpha_{D-1}}e\indices{^{\alpha_{2}}_{M_{2}}}\cdots e\indices{^{\alpha_{D-1}}_{M_{D-1}}} d\sigma^{M_{2}}\wedge \cdots \wedge d\sigma^{M_{D-1}}. \label{intp}
\end{align}
From the asymptotic behaviors of the coordinates in Eq.~\eqref{coordinate_asymp}, any points in $\mathcal{H}$ is mapped into itself by a diffeomorphism $\phi$ generated by $\mathcal{A}$. 
Therefore, the integral region $\phi(\partial\Sigma)$ corresponds to the limit of $\rho\to 0$. 
Note that, since $\epsilon_{\mu\nu\alpha_{2}\cdots\alpha_{D-1}}$ is anti-symmetric under the change in its indices, the integrand in Eq.~\eqref{intp} vanishes except for the contributions coming from the contractions of indices where one of $(\mu,\nu, \alpha_{M_{2}},\cdots, \alpha_{M_{D-1}})$ is $\rho$. Such a contribution is always $\mo{\rho}{}$ since Eqs.~\eqref{sufficient} and \eqref{asym_jacobian} hold.
Thus, we finally get
\begin{align}
	\eqref{intp} \propto \lim_{\rho \to 0}\int_{\phi(\p{}\Sigma)} \mo{\rho}{} \dd \sigma^{2}\cdots \dd \sigma^{D-1}=0
\end{align}
and conclude that Eqs.~\eqref{sufficient1} is also a sufficient condition for the integrability condition to be satisfied at any metric $g_{\mu\nu}$ in $\mathcal{S}$.

Our approach adopted in this paper is summarized in the following 4 steps:
\begin{screen}
\begin{description}
	\item[Step~1]
	\quad Fix a background metric $\bar{g}_{\mu\nu}$ of interest.
	\item[Step~2]
	      \quad For the background metric $\bar{g}_{\mu\nu}$, find two vector fields $\xi$ and $\eta$ with the asymptotic form in Eq.~\eqref{sufficient1}
	      satisfying Eq.~\eqref{eq_non-triviality_background}.
	      These are the candidates of the vector fields which generate non-trivial diffeomorphisms whose charges are integrable.
	\item[Step~3]
	      \quad Introduce the minimal Lie algebra $\mathcal{A}$ including $\xi$ and $\eta$ by calculating their commutators.
	      Check whether Eq.~\eqref{sufficient2} holds.
	      If it does, go to the next step since the charges are integrable. Otherwise, go back to Step~2.
	\item[Step~4]
	    \quad Investigate the algebra of the charges for $\mathcal{A}$ via \eqref{eq_non-triviality_background}.
\end{description}
\end{screen}

A crucial difference between the approach in Ref.~\cite{tomitsuka2021} and the one proposed in this paper is the step where we check the integrability condition. 
In Ref.~\cite{tomitsuka2021}, we checked whether Eq.\eqref{condition1} holds for candidates of vector fields satisfying Eq.\eqref{eq_non-triviality_background}. 
It takes efforts in this step since we need to calculate all the diffeomorphisms generated by the algebra of the vector fields.
Furthermore, these efforts may be wasted since the charges sometimes turn out not to be integrable. 
In contrast, in our new approach, we adopted Eq.\eqref{sufficient2} as a sufficient condition for the charges to be integrable, which can be checked at the bachground metric.
It is much easier to check Eq.\eqref{sufficient2} than Eq.\eqref{condition1} since we do not need to identify the diffeomorphisms generated by the algebra of the vector fields.

As a demonstration, we investigate asymptotic symmetries on Killing horizons in the following section. 
Adopting our approach, we find that a class of supertranslation, superrotation and superdilatation yields a non-trivial and integrable algebra of charges with a central extension. 


\section{Asymptotic symmetries on Killing horizon}
\label{sec:Killing}
Let us investigate the asymptotic symmetries at a Killing horizon of a spacetime with our new approach developed in the last section. We will find a new class of asymptotic symmetries and show that the algebra of the corresponding charges is a central extension of the algebra  of vector fields generating the transformations of the symmetries.

\noindent
\underline{Step~1}

Here, we adopt the following $D$-dimensional metric as the background metric:
\begin{align}
	(\bar{g}_{\mu\nu}) =
	\begin{pmatrix}
		-\kappa^{2} \rho^{2}+ \mo{\rho}{4} & \mo{\rho}{4}     & f_{t\psi}\rho^{2} + \mo{\rho}{4} & f_{tA}\rho^{2} + \mo{\rho}{4} \\
		\mo{\rho}{4}                                            & 1 + \mo{\rho}{2} & \mo{\rho}{4}                     & \mo{\rho}{3}                  \\
		f_{t\psi}\rho^{2} + \mo{\rho}{4}                        & \mo{\rho}{4}     & f_{\psi\psi} + \mo{\rho}{2}      & \mo{\rho}{2}                  \\
		f_{tA}\rho^{2} + \mo{\rho}{4}                           & \mo{\rho}{3}     & \mo{\rho}{2}                     & \Omega_{AB} + \mo{\rho}{2}
	\end{pmatrix}\ \ \ (\rho \to 0)
	\label{backgroundmetric}
\end{align}
in the coordinate $(t,\rho, \psi,\theta^{A})$ for $A=3,\cdots,D-1$, where all coefficient functions $f_{t\psi},f_{tA}, f_{\psi\psi}$ and $\Omega_{AB}$ depend on $\theta^{A}$ while $\kappa$ is a constant. 
We assume that the coefficient functions and $\kappa$ are fixed so that the metric satisfies the Einstein equations.
This class of metrics contains important spacetimes, for example, de-Sitter spacetime and the Kerr spacetime. 
It is known that the asymptotic behavior of the metric near the Killing horizon located at $\rho=0$ is given by Eq.\eqref{backgroundmetric} and that the Cauchy surface is characterized by $t = const.$ \cite{PhysRevLett.125.241302}.

\noindent
\underline{Step~2}

Next we consider two vector fields $\xi$ and $\eta$ which have the asymptotic forms given by Eq.~\eqref{sufficient1}: 
\begin{align}
	\xi^{\mu}  & = (X^{t}(t,\psi,\theta^{A})+\mo{\rho}{},X^{\rho}(t,\psi,\theta^{A})\rho+\mo{\rho}{2}, X^{\psi}(t,\psi,\theta^{A})+\mo{\rho}{}, X^{A}(t,\psi,\theta^{A})+\mo{\rho}{}), \label{vectors1}\\
	\eta^{\mu} & = (Y^{t}(t,\psi,\theta^{A})+\mo{\rho}{},Y^{\rho}(t,\psi,\theta^{A})\rho+\mo{\rho}{2}, Y^{\psi}(t,\psi,\theta^{A})+\mo{\rho}{}, Y^{A}(t,\psi,\theta^{A})+\mo{\rho}{}) 
	\label{vectors2}
\end{align}
as $\rho \to 0$, where all coefficients are arbitrary functions of $t,\psi$ and $\theta^{A}$.
For the metric \eqref{backgroundmetric}, vector fields \eqref{vectors1} and \eqref{vectors2}, our guiding principle in Eq.~\eqref{Non} can be calculated as follows:
\begin{align}
	\frac{1}{8\pi G}\int_{\p{}{\Sigma}}\frac{2\sqrt{\Omega f_{\psi\psi}}}{\kappa}
	\Bigg[
	\frac12 \p{t}Y^{\rho}\p{t}X^{t} + &D_{M}Y^{M}\left(\kappa^{2}X^{t} - f_{tN}X^{N} + \frac12 \p{t}X^{\rho}\right)+ \p{A}f_{t\psi}X^{\psi}Y^{A} \nn \\
	&+ \left(\p{B}f_{tA}-\p{A}f_{tB}\right)X^{A}Y^{B}
	- (X \leftrightarrow Y)
	\Bigg] \dd \sigma^{2}\cdots \dd \sigma^{D-1} \neq 0 \label{non_trivial_Killing}
\end{align}
where $M,N=2,\cdots, D-1$ and $D_{M}$ denotes the covariant derivative on the $(D-2)$-dimensional hypersurface characterized by $t=const.$ and $\rho=const.$.

In Ref.~\cite{tomitsuka2021}, we investigated the following set of vector fields
\begin{align}
\xi^{t} &= tT_{1}(x^{M})+\mo{\rho}{2},\ 
\xi^{\rho}= \mo{\rho}{2},\ 
\xi^{M}= \mo{\rho}{2},\\
\eta^{t} &=\mo{\rho}{2},\ 
\eta^{\rho}=tT_{2}(x^M)\rho+\mo{\rho}{2},\ 
\eta^{M}= \mo{\rho}{2},
\end{align}
where $T_{1},T_{2}$ are arbitrary functions of $x^{M}$, which generate superdilatations. This is one of sets satisfying Eq.~\eqref{non_trivial_Killing}.
On the other hand, for given functions $T(x^{M})$ and $V^{M}(x^{N})$ of $x^{M}$, the vector fields defined by
\begin{align}
	\xi^{t}  & = T(x^{M})+\mo{\rho}{2},\ \xi^{\rho} = \mo{\rho}{2},\ 
	\xi^{M}= \mo{\rho}{2},\label{eq:xi_st_sr}\\
	\eta^{t} & = \mo{\rho}{2},\ \eta^{\rho} = \mo{\rho}{2},\ \eta^{M} = V^{M}(x^{N})+\mo{\rho}{2}\label{eq:eta_st_sr}
\end{align}
also satisfy Eq.\eqref{non_trivial_Killing}. 
In fact, this set of vector fields generates a well-known class of transformations called supertranslations and superrotations. 
See Appendix~\ref{sec:st_and_sr} for a comment on the integrability of the charges for this algebra. 
As a first trial, let us analyze a simple algebra containing the above two known cases, which is given by
\begin{align}
\xi^{t} &= F_{1}(x^{M}) + tG_{1}(x^{M})+\mo{\rho}{2},\ 
\xi^{\rho}= (H_{1}(x^{M}) + tJ_{1}(x^M))\rho+\mo{\rho}{2},\ 
\xi^{M}= K_{1}^{M}(x^{N}) + \mo{\rho}{2},\label{eq:asymptotic_vec_sd_st_sr1} \\
\eta^{t} &= F_{2}(x^{M}) + tG_{2}(x^{M})+\mo{\rho}{2},\ 
\eta^{\rho}= (H_{2}(x^{M}) + tJ_{2}(x^M))\rho+\mo{\rho}{2},\ 
\eta^{M}= K_{2}^{M}(x^{N}) + \mo{\rho}{2}\label{eq:asymptotic_vec_sd_st_sr2}
\end{align}
in the rest of this section, where $F_{i}(x^{M}), G_{i}(x^{M}), H_{i}(x^{M}), J_{i}(x^{M})$ and $K^{M}_{i}(x^{N})$ are arbitrary functions of $x^{M}$.

\noindent
\underline{Step~3}

For an arbitrary set of vector fields with asymptotic behavior in Eq.~\eqref{vectors2}, the pre-symplectic current at the background metric given in Eq.~\eqref{onshell_pre} can be calculated as 
\begin{subequations}
\begin{align}
\label{omega_Killing_t}
\omega^{t}(\bar{g}, \pounds_{\eta}\bar{g}, \pounds_{\xi}\bar{g}) &\approx  \p{M}\left(-\frac{\sqrt{\Omega f_{\psi\psi}}}{2\kappa\rho}\left[\p{t}X^{M}\left(\p{t}Y^{t}-D_{N}Y^{N}\right) - (X \leftrightarrow Y) \right] \right) + \mo{1}{}. \\
\label{omega_Killing_rho}
\omega^{\rho}(\bar{g}, \pounds_{\eta}\bar{g}, \pounds_{\xi}\bar{g}) &\approx -\frac{\sqrt{\Omega f_{\psi\psi}}}{\kappa}
	\p{t}\Bigg(
	\frac12 \p{t}Y^{\rho}\p{t}X^{t} + D_{M}Y^{M}(\kappa^{2}X^{t} \nn \\
	&- f_{tM}X^{M} + \frac12 \p{t}X^{\rho})	
+ \p{A}f_{t\psi}X^{\psi}Y^{A} + \left(\p{B}f_{tA}-\p{A}f_{tB}\right)X^{A}Y^{B}
	- (X \leftrightarrow Y)
	\Bigg) \nn \\
&	+ \p{M}\left( \frac{\sqrt{\Omega f_{\psi\psi}}}{\kappa}\left[\left(-\kappa^{2}Y^{t} + f_{tN}Y^{N} - \p{t}Y^{\rho}\right)\p{t}X^{M} - (X \leftrightarrow Y)\right] \right)+ \mo{\rho}{} \\
\label{omega_Killing_M}
\omega^{M}(\bar{g}, \pounds_{\eta}\bar{g}, \pounds_{\xi}\bar{g}) &\approx \frac{\sqrt{\Omega f_{\psi\psi}}}{2\kappa\rho}\p{t}\Big(\p{t}X^{M}\left(\p{t}Y^{t}-D_{N}Y^{N}\right) - (X \leftrightarrow Y) \Big) \nn\\
&+\p{N}\left(-\frac{\sqrt{\Omega f_{\psi\psi}}}{\kappa \rho}\left[
\p{t}Y^{M}\p{t}X^{N} - (X \leftrightarrow Y)\right] \right)
+ \mo{1}{}.
\end{align}
\end{subequations}
for $\rho \to 0$.

The components of the commutators of the vector fields in Eqs.~\eqref{eq:asymptotic_vec_sd_st_sr1} and \eqref{eq:asymptotic_vec_sd_st_sr2} are calculated as
\begin{align}
[\xi, \eta]^{t} &= (F_{1}G_{2} - G_{1}F_{2} + K_{1}^{M}\p{M}F_{2} - K_{2}^{M}\p{M}F_{1}
) + t(K_{1}^{M}\p{M}G_{2} - K_{2}^{M}\p{M}G_{1})+\mo{\rho}{2} \nn \\
[\xi,\eta]^{\rho}& = \left\{(F_{1}J_{2} -J_{1}F_{2} + K_{1}^{M}\p{M}H_{2} - K_{2}^{M}\p{M}H_{1}) + t(G_{1}J_{2} - J_{1}G_{2} +K_{1}^{M}\p{M}J_{2} - K_{2}^{M}\p{M}J_{1})\right\}\rho+\mo{\rho}{2} \nn \\
[\xi,\eta]^{M}&= \left(K_{1}^{N}\p{N}K_{2}^{M} - K_{2}^{N}\p{N}K_{1}^{M}\right) + \mo{\rho}{2} 
\end{align}
for $\rho\to 0$. 
Thus, let us define the closed algebra $\mathcal{A'}$ including $\xi,\eta$
\begin{align}
\mathcal{A'}&  \nn\\
&\coloneqq \left\{
V=\left(
F(x^{M}) + tG(x^{M})+\mo{\rho}{2},
\rho\left(H(x^{M}) + tJ(x^{M})\right)+\mo{\rho}{2},
K^{M}(x^{N})+ \mo{\rho}{2}
\right)\right\}.
\end{align}
In this case, since we have 
\begin{align}
	\omega^{t}(\bar{g}, \pounds_{\eta}\bar{g}, \pounds_{\xi}\bar{g}) = \mo{1}{},\ \omega^{\rho}(\bar{g}, \pounds_{\eta}\bar{g}, 		\pounds_{\xi}\bar{g}) = \mo{1}{},\ \omega^{M}(\bar{g}, \pounds_{\eta}\bar{g}, \pounds_{\xi}\bar{g}) = \mo{1}{}\quad (\rho \to 0)\ \ \ \ \forall 		\eta, \xi \in {\mathcal A'}
\end{align}
from Eqs.~\eqref{omega_Killing_t}$\sim$\eqref{omega_Killing_M}, 
Eq.~\eqref{sufficient2} is not satisfied. 
Thus, $\mathcal{A}'$ is not suitable for our purpose. 

From Eq.~\eqref{omega_Killing_rho}, it immediately turns out that if we impose an additional condition
\begin{align}
    D_{M}K^{M} = 0
    \label{divergencelss},
\end{align}
then we get $\omega^{\rho}(\bar{g}, \pounds_{\eta}\bar{g}, 		\pounds_{\xi}\bar{g}) = \mo{\rho}{}$ and hence Eq.~\eqref{sufficient2} is satisfied.
This condition in Eq.~\eqref{divergencelss} means that we pick up only a divergenceless part in superrotation.
Since
\begin{align*}
D_{M}(K_{1}^{N}\p{N}K_{2}^{M}- K_{2}^{N}\p{N}K_{1}^{M}) &= 
D_{M}K_{1}^{N}D_{N}K_{2}^{M} - D_{M}K_{2}^{N}D_{N}K_{1}^{M} +
K_{1}^{N}D_{M}D_{N}K_{2}^{M}  - K_{2}^{N}D_{M}D_{N}K_{1}^{M}  \\
&= K_{1}^{N}R_{LN}K_{2}^{L} + K_{1}^{N}D_{N}D_{M}K_{2}^{M} - K_{2}^{N}R_{LN}K_{1}^{L} - K_{2}^{N}D_{N}D_{M}K_{1}^{M} \\
&=0,
\end{align*}
holds,
the algebra
\begin{align}
&\mathcal{A}  \nn\\
&\coloneqq \left\{
V=\left(
F(x^{M}) + tG(x^{M})+\mo{\rho}{2},
\rho\left(H(x^{M}) + tJ(x^{M})\right)+\mo{\rho}{2},
K^{M}(x^{N})+ \mo{\rho}{2}
\right)
\mid D_{M}K^{M}=0 \right\}
\label{A'}
\end{align}
is closed. Therefore, instead of $\mathcal{A}'$, we hereafter adopt $\mathcal{A}$. 
Since Eqs.~\eqref{eq_non-triviality_background} and \eqref{sufficient2} are satisfied for $\mathcal{A}$, the charges are integrable and form a non-trivial algebra. 

\noindent
\underline{Step~4}

Let us investigate the algebra of charges for $\mathcal{A}$.
For simplicity, in the following, we will analyze
\begin{align}
	(\bar{g}_{\mu\nu}) =
	\begin{pmatrix}
		-\kappa^{2} \rho^{2}+ \mo{\rho}{4} & \mo{\rho}{4}     & f_{t\theta}\rho^{2} + \mo{\rho}{4} & f_{t\phi}\rho^{2} + \mo{\rho}{4} \\
		\mo{\rho}{4}                                            & 1 + \mo{\rho}{2} & \mo{\rho}{4}                     & \mo{\rho}{3}                  \\
		f_{t\theta}\rho^{2} + \mo{\rho}{4}                        & \mo{\rho}{4}     & A + \mo{\rho}{2}      & \mo{\rho}{2}                  \\
		f_{t\phi}\rho^{2} + \mo{\rho}{4}                           & \mo{\rho}{3}     & \mo{\rho}{2}                     & A\sin^{2}\theta + \mo{\rho}{2}
	\end{pmatrix}\ \ \ 
	\label{g_killing}
\end{align}
as $\rho\to 0$
in the coordinate system $(t,\rho,\theta,\phi)\ \ (0\leq \theta \leq \pi, 0 \leq \phi \leq 2\pi)$ for $D=4$.
In this case, the induced metric on the horizon is given by $ds^{2}|_{\p{}\Sigma} = A(d\theta^{2} + \sin^{2}\theta d\phi^{2})$, where $A>0$ is a parameter describing the area of the horizon.

Functions characterizing an element in $\mathcal{A}$ in Eq.~\eqref{A'} can be expanded as follows:
\begin{gather}
\begin{align}
&F(\theta, \phi) = \sum_{lm}a_{lm}Y_{lm}(\theta,\phi),~~G(\theta, \phi) = \sum_{lm}b_{lm}Y_{lm}(\theta,\phi), \\
&H(\theta, \phi) = \sum_{lm}c_{lm}Y_{lm}(\theta,\phi),~~J(\theta, \phi) = \sum_{lm}d_{lm}Y_{lm}(\theta,\phi), 
\end{align}\\
K^{A}(\theta,\phi) = -\frac{1}{\sin\theta}\epsilon^{AB}\p{B}\Psi(\theta,\phi), \  \Psi(\theta, \phi)=\sum_{lm}e_{lm}Y_{lm}(\theta,\phi),
\end{gather}
where 
\begin{align}
    Y_{lm}(\theta, \phi) = (-1)^{m}\sqrt{\frac{(2l+1)}{4\pi}\frac{(l-m)!}{(l+m)!}}P_{l}^{m}(\cos\theta)e^{im\phi}
\end{align}
is the spherical harmonics, $P^{m}_{l}(\cos\theta)$ is the associated Legendre polynomials and 
\begin{align}
    \epsilon^{\theta\phi} &= -\epsilon^{\phi\theta} = 1,\\
    \epsilon^{\theta\theta} &= \epsilon^{\phi\phi} =0.
\end{align}

All the independent generators are listed as
\begin{subequations}
\begin{align}
&J_{lm}^{(t,0)} = Y_{lm}\p{t} , \\
&J_{lm}^{(t,1)} = tY_{lm}\p{t}, \\
&J_{lm}^{(\rho, 0)} = \rho Y_{lm}\p{\rho}, \\
&J_{lm}^{(\rho,1)} = t\rho Y_{lm}\p{\rho}, \\
&J_{lm}^{(R)} =\frac{1}{\sin\theta}\left(\p{\theta}Y_{lm}\p{\phi} - \p{\phi}Y_{lm}\p{\theta} \right) ,
\end{align}
\end{subequations}
where we have omitted $\mathcal{O}(\rho^2)$ in each component of the generators since it does not affect the algebraic structure nor the calculation on the constant term $K(\xi,\eta)$ in Eq.~\eqref{eq:poisson_bracked_general}. 
Their commutators are calculated as 
\begin{subequations}
\begin{align}
\label{vec_algebra}
&[J^{(t,0)}_{lm}, J^{(t,0)}_{l'm'}]  = 0,
~~~[J^{(t,0)}_{lm}, J^{(t,1)}_{l'm'}] = \sum G^{l''m''}_{lml'm'}J^{(t,0)}_{l''m''}, \\
&[J^{(t,0)}_{lm}, J^{(\rho,0)}_{l'm'}]  =0,
~~~[J^{(t,0)}_{lm}, J^{(\rho,1)}_{l'm'}] = \sum G^{l''m''}_{lml'm'}J^{(\rho,0)}_{l''m''}, \\
&[J^{(t,0)}_{lm}, J^{(R)}_{l'm'}] = -\sum C_{lml'm'}^{l''m''}J^{(t,0)}_{l''m''},
\\
&[J^{(t,1)}_{lm}, J^{(t,1)}_{l'm'}] =0,
~~~[J^{(t,1)}_{lm}, J^{(\rho,0)}_{l'm'}] = 0,\\
&[J^{(t,1)}_{lm}, J^{(\rho,1)}_{l'm'}] = \sum G^{l''m''}_{lml'm'}J^{(\rho,1)}_{l''m''},\label{vec_algebra_nontrivial} \\
&[J^{(t,1)}_{lm}, J^{(R)}_{l'm'}] = -\sum C^{l''m''}_{lml'm'}J^{(t,1)}_{l''m''}, \\
&[J^{(\rho,0)}_{lm}, J^{(\rho,0)}_{l'm'}] =0,
~~~[J^{(\rho,0)}_{lm}, J^{(\rho,1)}_{l'm'}] = 0,\\
&[J^{(\rho,0)}_{lm}, J^{(R)}_{l'm'}] = -\sum C^{l''m''}_{lml'm'}J^{(\rho,0)}_{l''m''}, \\
&[J^{(\rho,1)}_{lm}, J^{(\rho,1)}_{l'm'}] = 0,\\
&[J^{(\rho,1)}_{lm}, J^{(R)}_{l'm'}] = -\sum C^{l''m''}_{lml'm'}J^{(\rho,1)}_{l''m''}, \\
&[J^{(R)}_{lm}, J^{(R)}_{l'm'}] = \sum C^{l''m''}_{lml'm'}J^{(R)}_{l''m''},
\label{vec_algebra_last}
\end{align}
\end{subequations} 
where the structure constants $G^{l''m''}_{lml'm'}$ and $C^{l''m''}_{lml'm'}$ satisfy the following relations
\begin{align}
Y_{lm}Y_{l'm'} = \sum_{l''m''}G^{l''m''}_{lml'm'}Y_{l''m''},&~~~G^{l''m''}_{lml'm'} = G^{l''m''}_{l'm'lm}, \\
\frac{1}{\sin\theta}\left(\p{\theta}Y_{lm}\p{\phi}Y_{l'm'} - \p{\phi}Y_{lm}\p{\theta}Y_{l'm'} \right) = \sum_{l''m''} C_{lml'm'}^{l''m''}Y_{l''m''},&~~~C_{lml'm'}^{l''m''} = - C_{l'm'lm}^{l''m''}.
\end{align}
From Eq.\eqref{non_trivial_Killing}, we find that there are two non-vanishing Poisson brackets evaluated at the background metric.
One of them is
\begin{align}
\left\{H[J_{lm}^{(t,1)}], H[J_{l'm'}^{(\rho,1)}]\right\}\Big|_{\bar{g}} =\frac{A}{8\pi G\kappa} \sum_{l''m''} G_{lml'm'}^{l''m''}\int_{0}^{2\pi}\int_{0}^{\pi} Y_{l''m''}\sin\theta d\theta d\phi,\label{eq:Po_t1_rho1}
\end{align}
while the other is
\begin{align}
\left\{H[J_{lm}^{(R)}], H[J_{l'm'}^{(R)}]\right\}\Big|_{\bar{g}} =\frac{A}{8\pi G\kappa} \sum_{l''m''} C_{lml'm'}^{l''m''}\int_{0}^{2\pi}\int_{0}^{\pi} 2\p{\phi}f_{t\theta}Y_{l''m''}d\theta d\phi.\label{eq:Po_RR}
\end{align}

By using these formulas, let us investigate whether the algebra of the charges is a central extension of the algebra of the vector fields. 
For the latter Poisson bracket in Eq.\eqref{eq:Po_RR}, shifting the charge by a constant as
\begin{align}
H'[J_{lm}^{(R)}] \coloneqq
H[J_{lm}^{(R)}]
+\frac{A}{8\pi G\kappa}\int_{0}^{2\pi}\int_{0}^{\pi} 2\p{\phi}f_{t\theta}Y_{lm}d\theta d\phi,
\end{align}
Eq.\eqref{eq:Po_RR} can be rewritten as
\begin{align}
\left\{H'[J_{lm}^{(R)}], H'[J_{l'm'}^{(R)}]\right\} = \sum_{l''m''}C^{l''m''}_{lml'm'}
H'[J_{l''m''}^{(R)}].
\end{align}
This redefinition of the charge does not affect other Poisson brackets. 
On the other hand, for the former one in Eq.~\eqref{eq:Po_t1_rho1}, we may redefine
\begin{align}
H'[J_{lm}^{(\rho,1)}] \coloneqq
H[J_{lm}^{(\rho,1)}]
+\frac{A}{8\pi G\kappa}\int_{0}^{2\pi}\int_{0}^{\pi} Y_{lm}\sin\theta d\theta d\phi,
\end{align} so that Eq.~\eqref{eq:Po_t1_rho1} is recast into
\begin{align}
\left\{H[J_{lm}^{(t,1)}], H'[J_{l'm'}^{(\rho,1)}]\right\} = \sum_{l''m''}G^{l''m''}_{lml'm'}
H'[J_{l''m''}^{(\rho,1)}].
\end{align}
However, since $\left\{H[J_{lm}^{(\rho,1)}], H[J_{l'm'}^{(R)}]\right\}\Big|_{\bar{g}} =0$, this redefinition affects another Poisson bracket in such a way that
\begin{align}
\left\{H'[J_{lm}^{(\rho,1)}], H[J_{l'm'}^{(R)}]\right\} &= -\sum_{l''m''}C^{l''m''}_{lml'm'}H[J_{l''m''}^{(\rho,1)}]  \nn \\
&= -\sum_{l''m''}C^{l''m''}_{lml'm'}\left(H'[J_{l''m''}^{(\rho,1)}] -\frac{A}{8\pi G\kappa}\int_{0}^{2\pi}\int_{0}^{\pi} Y_{l''m''}\sin\theta d\theta d\phi\right)
\end{align}
holds. 
Thus, these constants cannot be absorbed to the generators by redefinition. 
They are calculated as
\begin{align}
K_{lml'm'} \coloneqq  \left\{H[J_{lm}^{(t,1)}], H[J_{l'm'}^{(\rho,1)}]\right\}\Big|_{\bar{g}} &= \frac{A}{8\pi G\kappa}\int_{0}^{2\pi}\int_{0}^{\pi}\sin\theta
Y_{lm}Y_{l'm'}d\theta d\phi \nn \\
&=\frac{A}{8\pi G\kappa}(-1)^{m}\int_{0}^{2\pi}\int_{0}^{\pi}\sin\theta
Y_{lm}Y^{*}_{l'(-m')}d\theta d\phi \nn \\
&=\frac{A}{8\pi G\kappa}(-1)^{m}\delta_{ll'}\delta_{m(-m')}.
\end{align}

Summarizing the above arguments, we finally get the following charge algebra:
\begin{align}
&\left\{H[J_{lm}^{(t,1)}], H[J_{l'm'}^{(\rho,1)}]\right\} = \sum_{l''m''}G^{l''m''}_{lml'm'}H[J_{l''m''}^{(\rho,1)}]
+\frac{A}{8\pi G\kappa}(-1)^{m}\delta_{ll'}\delta_{m(-m')},
\label{result_1}
\\
&\text{others are isomorphic to }\mathcal{A}\text{ in Eqs.~\eqref{vec_algebra}-\eqref{vec_algebra_last} except for $\eqref{vec_algebra_nontrivial}$}.
\label{result_2}
\end{align}
Since $A\neq 0$, the algebra of the charges is a central extension of $\mathcal{A}$. 
Equations~\eqref{result_1} and \eqref{result_2} are the main results in this section.

\section{Summary}
\label{sec:summary}
In this paper, we developed a new approach to investigate asymptotic symmetries by modifying the protocol proposed in Ref.~\cite{tomitsuka2021} by the authors of this paper and a collaborator. 
The key ingredient of our approach is making use of Eqs.~\eqref{eq_non-triviality_background} and \eqref{sufficient2} to find the algebra $\mathcal{A}$ of vector fields that generates transformations of asymptotic symmetries with non-trivial and integrable charges. 
As we have seen in Sec.~\ref{sec:integrability}, 
Eq.~\eqref{sufficient2} provides a sufficient condition for the charges to be integrable, which can be checked at the background metric. 
This is a significant difference between the modified approach and the original one in Ref.~\cite{tomitsuka2021}, which saves the efforts of calculating all the diffeomorphisms generated by $\mathcal{A}$ required in the latter approach. 
As is mentioned in Sec.~\ref{sec:review_on_Lie}, the Poisson brackets of the charges can be calculated at the background metric and hence the algebra of the charges can be fully identified without calculating the diffeomorphisms generated by $\mathcal{A}$ explicitly. 

In Sec.~\ref{sec:Killing}, as a demonstration of our approach, we have investigated asymptotic symmetries of spacetimes with the Killing horizon with metrics in Eq.~\eqref{backgroundmetric}.  
We found that a new algebra of supertranslations, superrotations and superdilatations in Eq.~\eqref{A'} yields a non-trivial algebra of integrable charges. 
It is proven that for the algebra in Eq.~\eqref{A'}, we have to eliminate rotationless part of superrotations to obtain integrable charges. 
As a particular example, for $(1+3)$-dimensional spacetime with metrics in Eq.\eqref{g_killing}, we explicitly calculated the algebra of charges, which is shown to be a central extension of the algebra of the vector fields. 

It should be emphasized that our approach can be applied to any spacetime as long as we consider the diffeomorphisms which do not shift the boundary on which charges are defined.
In particular, as we have mentioned in Introduction, microstates classified by asymptotic symmetries on a horizon are a possible origin of the Bekenstein-Hawking entropy. 
Our algorithmic approach is powerful to list such asymptotic symmetries. 
A discovery of new asymptotic symmetries will lead a better understanding of the nature of gravity and the spacetime structures, 
as the asymptotic symmetries in anti-de Sitter spacetime found in Ref.~\cite{brown1986} led to the development of the AdS/CFT correspondence \cite{AdS/CFT}.

Of course, it should be noted that there may be asymptotic symmetries which cannot be found in our approach since Eqs.~\eqref{eq_non-triviality_background} and \eqref{sufficient2} are sufficient conditions for the charges to be integrable and form a non-trivial algebra. 
Nevertheless, we expect that our approach proposed in this paper is helpful to find new asymptotic symmetries as we have demonstrated the example in Sec.~\ref{sec:Killing}. 

\section*{Acknowledgment}
    The authors thank Masahiro Hotta for useful discussions.
    This research was partially supported by Graduate Program on Physics for the Universe of Tohoku University (T.T.) and by JST SPRING, Grant Number JPMJSP2114 (T.T).


\let\doi\relax

\appendix

\section{Duality between a diffeomorphism and a coordinate transformation of tensor fields}\label{sec:duality}

We here make a brief review of the duality between a diffeomorphism and a coordinate transformation of tensor fields.
Let $M$ and $N$ be $D$-dimensional manifolds.
We consider a $C^\infty$ map $\phi:M\to N$ and the pull-back $g=\phi^{*}\bar{g}$.
We take charts $(U, \varphi)$ around $p \in U \subset M$ and  $(V, \psi)$ around $q=\phi(p)\in V \subset N$.
Each coordinate system is denoted by
\begin{align}
	\varphi(p) = (x^{0}(p),\cdots,x^{D-1}(p)) \\
	\psi(q) = (y^{0}(q),\cdots, y^{D-1}(q)).
\end{align}
The components of the metrics $g$ and $\bar{g}$ are related as
\begin{align}
	g_{\mu\nu}(x(p)) = \bar{g}_{\rho\sigma}(y(q))\frac{\p{}y^{\rho}}{\p{}x^{\mu}}\frac{\p{}y^{\sigma}}{\p{}x^{\nu}},\label{eq:components_metric_pullback}
\end{align}
where $g|_{p} = g_{\mu\nu}(x(p))dx^{\mu}|_{p} \otimes dx^{\nu}|_{p}$ and $\left.\bar{g}\right|_{q} = \left.\bar{g}_{\rho\sigma}(y(q))dy^{\rho}\right|_{q}\otimes \left.dy^{\sigma}\right|_{q}$.
Since $\psi \circ\phi$ is a smooth function $M \to {\mathbb R}^{D}$, we can introduce a new coordinate system around $p\in M$
\begin{align}
	\psi\circ\phi(p) = (x'^{0}(p),\cdots,x'^{D-1}(p)).
\end{align}
From Eq.\eqref{eq:components_metric_pullback}, the metric $g$ satisfies
\begin{align}
	\bar{g}_{\mu\nu}(y(q))=g_{\mu\nu}(x'(p)),
\end{align}
where $\left.g\right|_{p} = g_{\mu\nu}(x'(p))\left.dx'^{\mu}\right|_{p}\otimes \left. dx'^{\nu}\right|_{p} $.
This means that
the components of $\bar{g}|_{q} \in T^{*}_{q}N\otimes T^{*}_{q}N$ in a coordinate system $\psi:N \to {\mathbb R^{D}}$ are equal to the components of $g|_{p} \in T^{*}_{p}M \otimes T^{*}_{p}M$ in another coordinate system $\psi \circ \phi : M \to {\mathbb R^{D}}$.
Note that
\begin{align}
	\sqrt{-\bar{g}(y(q))} = \sqrt{-g(x'(p))} \label{det}
\end{align}
also holds, where $g(x'(p))$ and $\bar{g}(y(q))$ are the determinants of the metrics. 

In general, for the pull-backed $(r,s)$-tensor $T = \phi^{*}\bar{T}$, we have
\begin{align}
	T^{\mu_{1}\cdots\mu_{r}}_{\ \qquad\nu_{1}\cdots\nu_{s}}(x(p))            & = \overline{T}^{\rho_{1}\cdots\rho_{r}}_{\ \qquad\sigma_{1}\cdots\sigma_{s}}(y(q))\frac{\partial x^{\mu_{1}}}{\partial y^{\rho_{1}}}\cdots\frac{\partial x^{\mu_{r}}}{\partial y^{\rho_{r}}}\frac{\partial y^{\sigma_{1}}}{\partial x^{\nu_{1}}}\cdots\frac{\partial y^{\sigma_{s}}}{\partial x^{\nu_{s}}} \\
	\overline{T}^{\mu_{1}\cdots\mu_{r}}_{\ \qquad\nu_{1}\cdots\nu_{s}}(y(q)) & = T^{\mu_{1}\cdots\mu_{r}}_{\ \qquad\nu_{1}\cdots\nu_{s}}(x'(p)) \label{duality}.
\end{align}
Equation \eqref{duality} shows the duality between the active viewpoint, i.e, a diffeomorphism, and the passive viewpoint, i.e., a coordinate transformation, on an arbitrary tensor.
We can show
\begin{align}
	\left.\phi^{*}(\nabla_{\bar{\chi}}\overline{T})\right|_{p} = \left.\nabla_{\chi}T\right|_{p}
\end{align}
where $\bar{\chi} \in T_{\phi(p)}N$, $\overline{T}\in T_{p}M^{\otimes r}\otimes T_{p}^{*}M^{\otimes s}$ is an arbitrary $(r,s)$-tensor and we have defined
\begin{align}
	\chi \coloneqq \phi^{*}\bar{\chi} \in T_{p}M,\ T \coloneqq\phi^{*}\overline{T} \in T_{p}M^{\otimes r}\otimes T_{p}^{*}M^{\otimes s}.
\end{align}
Since
\begin{align}
	\phi^{*}(\pounds_{\bar{\chi}}\bar{g})\Big|_{p} = \pounds_{\chi}g\Big|_{p}
\end{align}
holds, we get
\begin{align}
	\phi^{*}(\overline{\nabla}_{\bar{\chi}}\pounds_{\bar{\xi}}\bar{g})\Big|_{p} = \nabla_{\chi}\pounds_{\xi}g\Big|_{p},
\end{align}
where $g = \phi^{*}\bar{g} \in T_{p}^{*}M \otimes T_{p}^{*}M$, $\overline{\nabla}$ and $\nabla$ denote covariant derivatives compatible with $\bar{g}$ and $g$, respectively.
As a consequence, each component satisfies
\begin{align}
	(\pounds_{\bar{\chi}}\bar{g})_{\mu\nu}(y(\phi(p)))                              & = (\pounds_{\chi}g)_{\mu\nu}(x'(p))         \label{duality1}               \\
	(\overline{\nabla}_{\bar{\chi}}\pounds_{\bar{\xi}}\bar{g})_{\mu\nu}(y(\phi(p))) & = (\nabla_{\chi}\pounds_{\xi}g)_{\mu\nu}(x'(p)) \label{duality2}.
\end{align}

\section{The asymptotic behavior of \texorpdfstring{$x'(y)$}{TEXT}}
\label{Flow_proof}
In this appendix, we show that for the algebra $\mathcal{A}$ whose elements satisfying Eq.~\eqref{sufficient1}, Eq.~\eqref{coordinate_asymp} holds.
Let us fix a vector field in $\mathcal{A}$ such that
\begin{align}
    \xi^{\mu}(y) \coloneqq (\mo{1}{},\mo{\rho}{},\mo{1}{},\cdots,\mo{1}{})\quad (\rho \to 0)
    \label{xi}
\end{align}
and consider its integral curve defined by
\begin{align}
\varphi^{\mu}_{\xi;\lambda}(y) \coloneqq \exp[\lambda \xi]y^{\mu}\coloneqq\sum_{n=0}^{\infty}\frac{\lambda^n}{n!}\xi^{n}y^{\mu},
\end{align}
where the action of $\xi^n$ on a function of $y^{\mu}$ is recursively defined as
\begin{align}
    \xi^{n}f(y) &= \xi^{n-1}\xi^{\mu}(y)\partial_{\mu}f(y) \qquad (n=1,2,3,\cdots), \\
    \xi^{0}f(y) &= f(y).
\end{align}
Defining
\begin{align}
\varphi^{\mu}_{\xi;\lambda,n}(y) \coloneqq \frac{\lambda^{n}}{n!}\xi^{n}y^{\mu},
\end{align}
we will show the following proposition:
\begin{proposition}
For any $n\in\mathbb{N}$, it holds
\begin{align}
 \quad \varphi^{\mu}_{\xi;\lambda,n}(y)= (\mo{1}{}, \mo{\rho}{}, \mo{1}{},\cdots, \mo{1}{})\quad (\rho \to 0).\label{eq:asym_phi_n}
\end{align}
\end{proposition}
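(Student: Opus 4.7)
My plan is to prove the proposition by induction on $n$, exploiting the recursion
\begin{align}
\varphi^{\mu}_{\xi;\lambda,n+1}(y) = \frac{\lambda}{n+1}\,\xi^{\nu}(y)\,\partial_{\nu}\varphi^{\mu}_{\xi;\lambda,n}(y),
\end{align}
which follows immediately from the definition $\xi^{n+1}y^{\mu}=\xi^{\nu}\partial_{\nu}(\xi^{n}y^{\mu})$. Throughout I interpret the symbol $\mo{\rho}{k}$ in the standard way, namely as a smooth function whose Taylor expansion in $\rho$ (at fixed $t,\sigma^{M}$) begins at order $\rho^{k}$; under this convention $\partial_{\rho}$ reduces the order by one, while $\partial_{t}$ and $\partial_{M}$ preserve it.

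The base case $n=0$ is immediate: $\varphi^{\mu}_{\xi;\lambda,0}(y)=y^{\mu}=(t,\rho,\sigma^{2},\dots,\sigma^{D-1})$, which is exactly of the form $(\mo{1}{},\mo{\rho}{},\mo{1}{},\dots,\mo{1}{})$. For the inductive step, assume \eqref{eq:asym_phi_n} holds at level $n$; I then examine $\xi^{\nu}\partial_{\nu}\varphi^{\mu}_{\xi;\lambda,n}$ component by component. For $\mu\neq 1$ the inductive hypothesis gives $\varphi^{\mu}_{\xi;\lambda,n}=\mo{1}{}$, so $\partial_{0}\varphi^{\mu}_{\xi;\lambda,n}$ and $\partial_{M}\varphi^{\mu}_{\xi;\lambda,n}$ are again $\mo{1}{}$, while $\partial_{1}\varphi^{\mu}_{\xi;\lambda,n}$ is (at worst) $\mo{1}{}$ after the order-one reduction. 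Combined with $\xi^{0},\xi^{M}=\mo{1}{}$ and $\xi^{1}=\mo{\rho}{}$, every term in $\xi^{\nu}\partial_{\nu}\varphi^{\mu}_{\xi;\lambda,n}$ is $\mo{1}{}$, as required. For $\mu=1$ the inductive hypothesis $\varphi^{1}_{\xi;\lambda,n}=\mo{\rho}{}$ gives $\partial_{0}\varphi^{1}_{\xi;\lambda,n}=\mo{\rho}{}$, $\partial_{M}\varphi^{1}_{\xi;\lambda,n}=\mo{\rho}{}$, and $\partial_{1}\varphi^{1}_{\xi;\lambda,n}=\mo{1}{}$, so
\begin{align}
\xi^{\nu}\partial_{\nu}\varphi^{1}_{\xi;\lambda,n} = \mo{1}{}\cdot\mo{\rho}{}+\mo{\rho}{}\cdot\mo{1}{}+\mo{1}{}\cdot\mo{\rho}{}=\mo{\rho}{}.
\end{align}

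The crucial point, and the only place where the structure of $\xi$ actually enters, is the cancellation for $\mu=1$: the single derivative $\partial_{\rho}\varphi^{1}_{\xi;\lambda,n}$ loses one power of $\rho$, but this loss is exactly compensated by the prefactor $\xi^{1}=\mo{\rho}{}$. Without the hypothesis $\xi^{\rho}=\mo{\rho}{}$ in Eq.~\eqref{sufficient1}, the order in $\rho$ would drift downward at each iteration and the induction would fail; this is the main thing to verify carefully. Once this is settled, $\varphi^{\mu}_{\xi;\lambda,n+1}$ inherits the same asymptotic form as $\varphi^{\mu}_{\xi;\lambda,n}$ and the induction closes. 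Summing the resulting series yields $x'^{\mu}(y)=\varphi^{\mu}_{\xi;1}(y)=(\mo{1}{},\mo{\rho}{},\mo{1}{},\dots,\mo{1}{})$, which is Eq.~\eqref{coordinate_asymp}. The main (modest) obstacle is simply to be rigorous about the order-reduction rule for $\partial_{\rho}$ under the smoothness convention for $\mo{\rho}{k}$; a general element of $\mathcal{A}$ is then handled by linearity, since any $\chi\in\mathcal{A}$ decomposes into pieces of the form \eqref{xi}.
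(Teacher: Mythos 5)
Your proof is correct and follows essentially the same route as the paper: induction on $n$ using the recursion $\varphi^{\mu}_{\xi;\lambda,n+1}=\tfrac{\lambda}{n+1}\xi^{\nu}\partial_{\nu}\varphi^{\mu}_{\xi;\lambda,n}$ together with the asymptotic form of $\xi$ in Eq.~\eqref{sufficient1}. You merely spell out the component-by-component bookkeeping (in particular the compensation of the $\partial_{\rho}$ order loss by $\xi^{\rho}=\mathcal{O}(\rho)$) that the paper's shorter inductive step leaves implicit.
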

\begin{proof}
We show the claim by mathematical induction with respect to $n$.
For $n=0$, Eq.~\eqref{eq:asym_phi_n} is clearly satisfied.
Assuming Eq.~\eqref{eq:asym_phi_n} is satisfied for $n=k$, we have
\begin{align}
\varphi^{\mu}_{\xi;\lambda,k+1}(y) &= \frac{\lambda}{k+1}\xi \varphi^{\mu}_{\xi;\lambda,k}(y) \nn \\
&=\frac{\lambda}{k+1}\xi^{\alpha}\partial_{\alpha}(\mo{1}{},\mo{\rho}{},\mo{1}{},\cdots,\mo{1}{})\nn \\
&=(\mo{1}{},\mo{\rho}{},\mo{1}{},\cdots,\mo{1}{}),
\end{align}
where we have used Eq.~\eqref{xi} and the assumption for $n=k$ at the last line.
Therefore, Eq.~\eqref{eq:asym_phi_n} also holds for $n=k+1$, concluding the proof.
\end{proof}

Since the integral curve generated by $\xi^{\mu}$ is given by 
\begin{align}
\varphi^{\mu}_{\xi;\lambda}(y) &= \sum_{n=0}^{\infty}\varphi^\mu_{\xi;\lambda,n}(y),
\end{align}
we have
\begin{align}
    \varphi^{\mu}_{\xi;\lambda}(y) = (\mo{1}{},\mo{\rho}{},\mo{1}{},\cdots,\mo{1}{}).
\end{align}
Next consider the map $\phi^{\mu}_{\xi}(y) \coloneqq \varphi^{\mu}_{\xi;\lambda=1}(y)$.
In general, diffeomorphisms generated by $\mathcal{A}$ and  connected to the identity transformation are given by a product of such maps, i.e.,
\begin{align}
    (\phi_{\xi^{(1)}}\circ \phi_{\xi^{(2)}}\circ\cdots \circ \phi_{\xi^{(N)}})(y)
\end{align}
for some $N$ and vector fields $\xi^{(1)},\xi^{(2)},\cdots,\xi^{(N)}$.
Let us analyze the asymptotic behavior for $N=2$. For two vector fields
\begin{align}
    \left(\xi^{(i)}\right)^{\mu}(y) & =(\mo{1}{},\mo{\rho}{},\mo{1}{},\cdots,\mo{1}{})\quad (i=1,2)
\end{align}
as $\rho\to 0$, we have
\begin{align}
    (\phi_{\xi^{(1)}}\circ \phi_{\xi^{(2)}})^\mu(y)
    =(\mo{1}{},\mo{\rho}{},\mo{1}{},\cdots,\mo{1}{}).
\end{align}
Repeating the same argument, it is shown that the asymptotic behavior of a general diffeomorphism $\phi$ generated by $\mathcal{A}$ is given by
\begin{align}
   \phi^{\mu}(y) = (\mo{1}{},\mo{\rho}{},\mo{1}{},\cdots,\mo{1}{})
\end{align}
for $\rho \to 0$. Therefore, the asymptotic behavior of the corresponding coordinate transformation $x'(y)$ is also given by
\begin{align}
    x'(y) = (\mo{1}{},\mo{\rho}{},\mo{1}{},\cdots,\mo{1}{}).
\end{align}

\section{Supertranslations and superrotations}\label{sec:st_and_sr}

The commutators of vector fields defined in Eqs.~\eqref{eq:xi_st_sr} and \eqref{eq:eta_st_sr} are calculated as
\begin{align}
[\xi, \eta]^{t} &= (V_{1}^{M}\p{M}T_{2} - V_{2}^{M}\p{M}T_{1}) +\mo{\rho}{2},\nn \\
[\xi,\eta]^{\rho} &=\mo{\rho}{2},\nn \\
[\xi,\eta]^{M}&= (V_{1}^{N}\p{N}V_{2}^{M} - V_{2}^{N}\p{N}V_{1}^{M}) + \mo{\rho}{2}
\end{align}
as $\rho\to 0$. 
As a closed algebra including $\xi,\eta$, let us adopt 
\begin{align}
\mathcal{A} \coloneqq \{V=\left(T(x^{M})+\mo{\rho}{2},\mo{\rho}{2}, V^{M}(x^{N}) +\mo{\rho}{2}\right) \mid T, V^{M} \text{ are arbitrary functions of $x^{M}$}\}.
\label{supertranslation_superrotaion_algebra}
\end{align}
From Eqs.~\eqref{omega_Killing_t}-\eqref{omega_Killing_M},
for any $\xi, \eta\in \mathcal{A}$, we have 
\begin{align}
	\omega^{t}(\bar{g}, \pounds_{\eta}\bar{g}, \pounds_{\xi}\bar{g}) = \mo{1}{},\ \omega^{\rho}(\bar{g}, \pounds_{\eta}\bar{g}, 		\pounds_{\xi}\bar{g}) = \mo{\rho}{},\ \omega^{M}(\bar{g}, \pounds_{\eta}\bar{g}, \pounds_{\xi}\bar{g}) = \mo{1}{}
\end{align}
as $\rho\to 0$. Therefore, the corresponding charges are integrable.


\end{document}